\long\def\comment#1{}
\newcommand{\beq}{\begin{equation}}
\newcommand{\eeq}{\end{equation}}
\newcommand{\beqno}{\begin{equation*}}
\newcommand{\eeqno}{\end{equation*}}
\newcommand{\bes}{\begin{split}}
\newcommand{\ees}{\end{split}}
\newcommand{\bdm}{\begin{displaymath}}
\newcommand{\edm}{\end{displaymath}}
\newtheorem{lemma}{Lemma}
\newtheorem{definition}{Definition}
\newcommand{\bd}{\begin{definition}}
\newcommand{\ed}{\end{definition}}
\newcommand{\bv}{\begin{vugraph}}
\newcommand{\ev}{\end{vugraph}}
\newcommand{\bi}{\begin{itemize}}
\newcommand{\ei}{\end{itemize}}
\newcommand{\ben}{\begin{enumerate}}
\newcommand{\een}{\end{enumerate}}
\newcommand{\bean}{\begin{eqnarray*} }
\newcommand{\eean}{\end{eqnarray*} }
\newcommand{\bea}{\begin{eqnarray} }
\newcommand{\eea}{\end{eqnarray} }
\newcommand{\ba}{\begin{array} }
\newcommand{\ea}{\end{array} }
\begin{document}

\title{Improved Combinatorial Algorithms for Wireless Information Flow}

\author{\authorblockN{Cuizhu Shi and Aditya Ramamoorthy}\\
\authorblockA{Department of Electrical and Computer Engineering, Iowa State University, Ames, Iowa 50011\\
Email: \{cshi, adityar\}@iastate.edu}
\thanks{This work was supported in part by NSF grant CNS-0721453 and NSF grant CCF-1018148.}}

\maketitle

\begin{abstract}

The work of Avestimehr et al. `07 has recently proposed a deterministic model for wireless networks and characterized the unicast capacity $C$ of such networks as the minimum rank of the adjacency matrices describing all possible source-destination cuts.
Amaudruz \& Fragouli first proposed a polynomial-time algorithm for finding the unicast capacity of a linear deterministic wireless network in their 2009 paper.
In this work, we improve upon Amaudruz \& Fragouli's work and further reduce the computational complexity of the algorithm by fully exploring the useful combinatorial features intrinsic in the problem. Our improvement applies generally with any size of finite fields associated with the channel model. Comparing with other algorithms on solving the same problem, our improved algorithm is very competitive in terms of complexity.

\end{abstract}

\section{Introduction\label{introduction}}

The deterministic channel model for wireless networks proposed by Avestimehr, Diggavi and Tse \cite{amir2007_deterministicmodel} \cite{amir2007_wirelessnetworkinfoflow} (referred to as ADT model thereafter) has been a useful tool for understanding the fundamental limitations of information transfer in wireless networks.
The ADT model captures two main features, the broadcasting and interference, that are present in wireless networks. It converts the wireless networks into deterministic networks, by making appropriate assumptions, that in turn lead to approximate capacity results.

Consider a point-to-point Gaussian channel given by $y=\sqrt{{\mbox{SNR}}}x+z$ where $z\sim{\cal{N}}(0,1)$ (${\cal{N}}$ represents Gaussian distribution). Assume $x$ and $z$ are real numbers, then we can write $y\approx 2^n\sum_{i=1}^nx(i)2^{-i}+\sum_{i=1}^\infty (x(i+n)+z(i))2^{-i}$ where $n=\lceil\frac{1}{2}\log {\mbox{SNR}}\rceil$ (here we assume a peak power of $1$ for $x$ and $z$). If we think of the transmitted signal $x$ as a sequence of bits at different signal levels, then the ADT model truncates $x$ and passes only its bits above noise level (the first $n$ most significant bits here), i.e., it converts the original Gaussian channel into a deterministic channel without noise.
When applying the ADT model to wireless networks, the broadcasting is captured by the fact that in the resultant deterministic networks, all outgoing edges from the same signal level of any transmitting node carry the same unit information, and the interference is captured by the fact that at each signal level of any receiving node, only the modulo sum of all the received signals is available to the receiving node. This model is called the linear finite-field deterministic channel model in \cite{amir2007_deterministicmodel} \cite{amir2007_wirelessnetworkinfoflow}. We refer to it as the ADT model and denote the finite field of size $p$ associated with the ADT model as $\mathbb{F}_p$ in this paper.

In \cite{amir2007_deterministicmodel} \cite{amir2007_wirelessnetworkinfoflow}, the unicast (i.e., with one source S and one destination D) capacity $C$ of any linear deterministic wireless relay network was characterized as the minimum rank of the adjacency matrices describing all its S-D cuts.
An exhaustive search for finding the minimum rank of the adjacency matrix for all S-D cuts results in an algorithm with complexity exponential in the size of the network.

Amaudruz \& Fragouli \cite{aurore2009_combinatorial_algo_deterministic} were the first to propose a polynomial-time algorithm for finding the unicast capacity of a linear deterministic wireless relay network (see also \cite{fragouli2009_journal}).
In this work, we improve upon Amaudruz \& Fragouli's work and further reduce the computational complexity of the algorithm by fully exploring the useful combinatorial features intrinsic in the problem. Our improvement applies generally with any size of finite fields $\mathbb{F}_p$ associated with the ADT model. Comparing with other algorithms on solving the same problem \cite{sadegh2009_combinatorialstudyofdeterministic} \cite{geomans2009}, our improved algorithm is very competitive in terms of complexity.

This paper is organized as follows. In Section \ref{notation}, we briefly introduce the polynomial-time algorithm by Amaudruz \& Fragouli for finding the unicast capacity of linear deterministic wireless relay networks.  Section \ref{ouralgo} gives a detailed description of our improvement upon the algorithm. First we introduce our improvement with an emphasis on the new components of our algorithm and how they fix the problems within the original algorithm. Then we explore several useful combinatorial features intrinsic in the problem. Finally we explain how these combinatorial features can be combined with our new components to reduce the complexity of the algorithm.
We also give the comparison results between our improved algorithm and other algorithms on solving the same problem.
Section \ref{conclusion} concludes the paper.

\section{Preliminaries and Background\label{notation}}

\subsection{Notations and Definitions}

In \cite{amir2007_wirelessnetworkinfoflow}, it is shown that an arbitrary deterministic relay network can be expanded over time to generate an asymptotically equivalent (in terms of transmission rate) layered network. Therefore, we focus on layered deterministic networks.

Let ${\cal{G=(V,E)}}$ denote a layered deterministic wireless relay network where ${\cal{V}}$ represents the set of nodes in the original wireless relay network, each node in $\mathcal{V}$ has several different levels of inputs and outputs and ${\cal{E}}$ is the set of directed edges going from one input of some node to one output of some other node. For example, Fig. \ref{fig:subfig1} gives a graph representation of a layered deterministic wireless relay network where each node is labeled with a capital letter, all inputs (outputs) from nodes are labeled as $\{x_i\}$ ($\{y_j\}$), $1\leqslant i,j\leqslant 8$.
In the layered network ${\cal{G}}$, all paths from the source node S to the destination node D have equal lengths \cite{amir2007_wirelessnetworkinfoflow}. The set of nodes ${\cal{V}}$ are divided into different layers according to their distances to S. The first layer consists of S and the last layer consists of D.
Let ${\cal{A}}(x_i)$ (or ${\cal{A}}(y_j)$) denote the node where an input $x_i$ (or an output $y_j$) belongs to. Let ${\cal{L}}(A)$ (or ${\cal{L}}(x_i)$, ${\cal{L}}(y_j)$) denote the layer number where node $A$ (or $x_i$, $y_j$) belongs to.
Denote $M$ as the maximum number of nodes in each layer, $L$ the total number of layers and $d$ the maximum number of outgoing edges from any input in any node in the network ${\cal{G}}$ in this paper.

A cut $\Omega$ in ${\cal{G}}$ is a partition of the nodes ${\cal{V}}$ into two disjoint sets $\Omega$ and $\Omega^c$ such that S $\in\Omega$ and D $\in\Omega^c$.
A cut is called a layer cut if all edges across the cut are emanating from nodes from the same layer, otherwise it is called a cross-layer cut. An edge $(x_i,y_j)\in {\cal{E}}$ belongs to layer cut $l$ if ${\cal{L}}(x_i)=l$.

The adjacency matrix $T(\textbf{x},\textbf{y})$ for the sets of inputs $\textbf{x}=\{x_1,x_2,...x_m\}$ and of outputs $\textbf{y}=\{y_1,y_2,...y_n\}$ in ${\cal{G}}$ is a matrix of size $m\times n$ with binary $\{0,1\}$ entries. The rows correspond to $\{x_i\in \textbf{x}\}$ and columns corresponding to $\{y_i\in\textbf{y}\}$ and $T(i,j)=1$ if $(x_i,y_j)\in {\cal{E}}$.
The adjacency matrix $T(E)$ for a set of edges, $E$, is the adjacency matrix for the sets of their inputs and their outputs.

A set of edges, $E$, are said to be linearly independent (LI) if rank$(T(E))=|E|$ (where the rank is computed over GF$(2)$), otherwise they are said to be linearly dependent (LD). In ${\cal{G}}$, each S-D path is of length $L-1$ and crosses each layer cut exactly once. A set of S-D paths are said to be LI if the subsets of their edges crossing each layer cut are LI, otherwise they are said to be LD. In this work, we will consider a slightly more general adjacency matrix, where the non-zero entries can be from a finite field $\mathcal{F}_p$, and the rank is also computed over $\mathcal{F}_p$. Of course, all our results will also apply to the binary field case.

Let ${\cal{E}}_{\Omega}$ be the set of edges crossing the cut $\Omega$ in ${\cal{G}}$. The cut value of $\Omega$ is defined as rank$(T({\cal{E}}_{\Omega}))$, which based on the definition equals the maximum number of LI edges in ${\cal{E}}_{\Omega}$. Note that the cut value defined above is different than that for regular graphs (which is just the number of edges crossing the cut). It is proved \cite{amir2007_deterministicmodel}\cite{amir2007_wirelessnetworkinfoflow} that the unicast capacity of a linear deterministic wireless relay network is equal to the minimum cut value among all S-D cuts.

\subsection{Algorithm by Amaudruz \& Fragouli}\label{originalalgo}

The unicast algorithm by Amaudruz and Fragouli \cite{aurore2009_combinatorial_algo_deterministic} finds the maximum number $C$ of linearly independent S-D paths in a given layered linear deterministic relay network ${\cal{G}}$, where $C$ is the unicast capacity of the network. The algorithm is a path augmentation algorithm, operating in iterations. In each iteration, the algorithm tries to find an additional S-D path so that all S-D paths found are LI. Let ${\cal{P}}=\{{\cal{P}}_{1},...,{\cal{P}}_{k}\}$ denote the set of $k$ LI S-D paths found in the first $k$ iterations. In the process of finding the $(k+1)$-th S-D path ${\cal{P}}_{k+1}$ in iteration $k+1$, the algorithm may make modifications to ${\cal{P}}$ while still maintaining a set of $k$ LI complete S-D paths. The unicast algorithm determines ${\cal{P}}_{k+1}$ by exploring nodes in ${\cal{G}}$ in a certain order as outlined shortly.

The algorithm is implemented in two recursive functions $E_A$ and $E_x$ that explore a node and input respectively. The exploration of a node $A$ takes place when ${\cal{P}}_{k+1}$ has been extended from S to A and needs to be completed from A to D.
In iteration $k+1$, the unicast algorithm calls $E_A$ with the following inputs: ${\cal{G}}$, ${\cal{P}}=\{{\cal{P}}_{1},...,{\cal{P}}_{k}\}$, the indicator function ${\cal{M}}$ (that implements a marking mechanism for visiting nodes and inputs/outputs) and S. The function $E_A$ returns true with one more S-D path ${\cal{P}}_{k+1}$ recorded in ${\cal{P}}$ if it succeeds in finding ${\cal{P}}_{k+1}$, false otherwise.

Exploring node A implies exploring all unused inputs $\{x_i\}$ of A. So we explain the exploration of an input $x_i$ of A below. Hereafter, denote $U^l$ as the sets of used edges by ${\cal{P}}$ in layer cut $l$ and $U_x^l$ and $U_y^l$ as the sets of inputs and outputs used by $U^l$. Let ${\cal{L}}(x_i)=l$. If $x_i\in U_x^l$, do nothing. Otherwise, consider each $y_j$ with $(x_i,y_j)\in{\cal{E}}$ as follows.

\begin{itemize}
\item [(a) ]
{\it $y_j$ is used.} Let $L_{x_i}$ denote the smallest subset of $U_x^l$ with $s=|L_{x_i}|\leqslant|U_x^l|=k$ such that $T(\{L_{x_i},x_i\},U_y^l)$ has rank $s$. The authors prove that replacing any $x_k\in L_{x_i}$ with $x_i$, the algorithm can still maintain $k$ LI S-D paths and the task now is to complete ${\cal{P}}_{k+1}$ from ${\cal{A}}(x_k)$. So in this case the unicast algorithm first finds the set $L_{x_i}$ in function FindL. Then it replaces each $x_k\in L_{x_i}$ with $x_i$ and calls a Match function to find a new set of $k$ edges in layer cut $l$ to maintain $k$ LI S-D paths in ${\cal{P}}$ and tries to complete ${\cal{P}}_{k+1}$ from ${\cal{A}}(x_k)$ if ${\cal{A}}(x_k)$ is not marked or from $x_k$ if $x_k$ is not marked.
We refer to this step as same-layer rewiring.
\item [(b) ]
{\it $y_j$ is not used.} A rank computation function is called on the matrix $T(\{U_x^l,x_i\},\{U_y^l,y_j\})$. If the matrix is not full rank or ${\cal{A}}(y_j)$ has been visited before, do nothing. If the matrix is full rank and ${\cal{A}}(y_j)$ has not been visited before, add $(x_i,y_j)$ to ${\cal{P}}_{k+1}$ and try to complete it from ${\cal{A}}(y_j)$ by exploring ${\cal{A}}(y_j)$. We refer to this step as forward move.

If it fails to complete ${\cal{P}}_{k+1}$ from ${\cal{A}}(y_j)$, a $\phi$-function is called for each $y_k\in U_y^l$ with ${\cal{A}}(y_k)={\cal{A}}(y_j)$. Let ${\cal{P}}_{y_k}$ be the path using $y_k$ and let $(x_k,y_k)\in U^l$ be the path edge. The idea of the $\phi$-function is to complete ${\cal{P}}_{k+1}$ from ${\cal{A}}(y_j)$ to D using the partial path of ${\cal{P}}_{y_k}$ from ${\cal{A}}(y_j)$ to D and then try to complete the path ${\cal{P}}_{y_k}$ from ${\cal{A}}(x_k)$. The $\phi$-function does the following: remove $(x_k,y_k)$ from the set of used edges and try to complete ${\cal{P}}_{y_k}$ from ${\cal{A}}(x_k)$. We refer to this step as backward rewiring.
The $\phi$-function will be executed at most $M$ times.
\end{itemize}

We refer the reader to \cite{aurore2009_combinatorial_algo_deterministic} for more details. The complexity of the algorithm is $O(M\cdot|{\cal{E}}|\cdot C^5)$ and its computational parts include the FindL, Match and rank computation functions each with complexity $O(k^4)$, $O(k^3)$ and $O(k^3)$ respectively.

\subsection{Other Related Algorithms}

Yazdi \& Savari \cite{sadegh2009_combinatorialstudyofdeterministic} developed another polynomial time algorithm with complexity $O(L^8 M^{12} h_0^3+L M^6 C h_0^4)$ (where $h_0$ denotes the maximum total number of inputs/outputs at any layer) by relating matroids with this problem. Most recently, Goemans, Iwata and Zenklusen \cite{geomans2009} proposed a strongly polynomial time algorithm for this problem, whose complexity is $O(L M^3 \log M)$, i.e., it does not depend upon $C$.

\section{Improved Unicast Algorithm\label{ouralgo}}

In this section we outline certain improvements that can be made to the algorithm of \cite{aurore2009_combinatorial_algo_deterministic}. In particular, we elaborate on several useful combinatorial aspects that allow us to reduce the overall time complexity. Moreover, these improvements also fix certain issues with the original algorithm \cite{aurore2009_combinatorial_algo_deterministic}. As mentioned previously, our proposed improvements apply over arbitrary finite fields.

\subsection{Improving the Original Algorithm} \label{improvement}

The main idea in \cite{aurore2009_combinatorial_algo_deterministic} is to find path ${\cal{P}}_{k+1}$ in iteration $k+1$ while maintaining linear independence among all S-D paths in ${\cal{P}}$. In this process, previous paths may be rewired. However, there are cases when the original algorithm may fail to find the exact unicast capacity. We illustrate this using the following examples. We point out that these issues seem to have been resolved in \cite{fragouli2009_journal}. However, our proposed algorithm has several differences from \cite{fragouli2009_journal} as discussed at the end of Section \ref{comparison}.

\noindent
\textbf{Improved Backward Rewiring}

We use the example in Fig. \ref{fig12} to show that there are cases where the $\phi$-function above is insufficient, causing failures of the original algorithm. Then we illustrate how it can be fixed by introducing an improved backward rewiring mechanism.

In Fig. \ref{fig:subfig1}, three LI S-D paths with color red, green and blue are found in the first three iterations of the algorithm. Let's see how the algorithm goes in iteration four. Let's say the algorithm has extended ${\cal{P}}_4$ along the purple path to $y_{20}$. The call $E_A({\cal{G}},{\cal{P}},{\cal{M}},N)$ fails since the only input $x_{24}$ of N is used by paths in ${\cal{P}}$. So $\phi$-function is called on $y_{19}$ and then node I is explored in $E_A({\cal{G}},{\cal{P}},{\cal{M}},I)$, but since there is only one path from all inputs of I to D, $E_A({\cal{G}},{\cal{P}},{\cal{M}},I)$ fails, and finally the algorithm returns false and reports unicast capacity of $3$. However, the unicast capacity of the network is $4$ and a capacity-achieving transmission scheme is given by the four S-D paths in Fig. \ref{fig:subfig2} in different colors.

We propose the following improved backward rewiring mechanism to fix the problem above and to replace the original $\phi$-function.  Let A denote a node in the network (not to be confused with A in the figure). First, the backward rewiring is allowed on every node A whenever it is explored in finding ${\cal{P}}_{k+1}$. Second, the backward rewiring on node A includes the following operations. Let ${\cal{L}}(A)=l+1$.
For any output $y$ of A with $y\in U_y^l$ and $y$ is used by a path in ${\cal{P}}$ at the beginning of the current iteration (if such $y$ exists), (1) find one $x\in U_x^l$ such that $T(U_x^l-x,U_y^l-y)$ has full rank, (2) then rematch $(U_x^l-x,U_y^l-y)$ to generate a new set of $k$ LI used path edges in layer cut $l$ and (3) finally try to complete the partial path from ${\cal{A}}(x)$. Lemma \ref{lemmmm22} guarantees that for a given $y\in U_y^l$ there is always one such $x$ and also a set of edges\footnote{We use the notation $P_{y \rightarrow x}$ since this set of edges can be interpreted as an alternating path, as we show in Section \ref{useful}} $P_{y\rightarrow x}=$ $\{(x_1,y_1=y)$, $(x_1,y_2)$, $(x_2,y_2)$, $(x_2,y_3)$, $...(x_{m'-1},y_{m'})$, $(x_{m'}=x,y_{m'})\}$ $=\{e_1,e_2,...,e_{2m'-1}\}$ with $(x_i,y_i),1\leq i\leq m'$ being edges used by ${\cal{P}}$, which can be found with complexity $O(k^3)$ and $O(k^2)$ respectively. Along the alternating path $P_{y\rightarrow x}$, the rematching of the used path edges in layer cut $l$ can be done easily as follows: $U^l=U^l-e_1+e_2-e_3+...-e_{2m'-1}$.

Consider applying our improved backward rewiring in the example in Fig. \ref{fig12}. It happens on the outputs of nodes N and I. Its application to N is straightforward. Let's look at its application at the output $y_{14}$ of node I. First it finds $x_6\in U_x^2$ with $T(U_x^2-x_6,U_y^2-y_{14})$ having full rank and the alternating path $P_{y_{14}\rightarrow x_6}=\{(x_7,y_{14}),(x_7,y_{13}),(x_6,y_{13})\}$. The rematching is done by $U^2=U^2-(x_7,y_{14})+(x_7,y_{13})-(x_6,y_{13})$. Then node $B={\cal{A}}(x_6)$ is explored. Finally the improved algorithm returns four LI S-D paths in Fig. \ref{fig:subfig2} as expected.

\vspace{-1mm}
\begin{figure}[htbp]
\centering
\subfigure[]{
\includegraphics[scale=0.35]{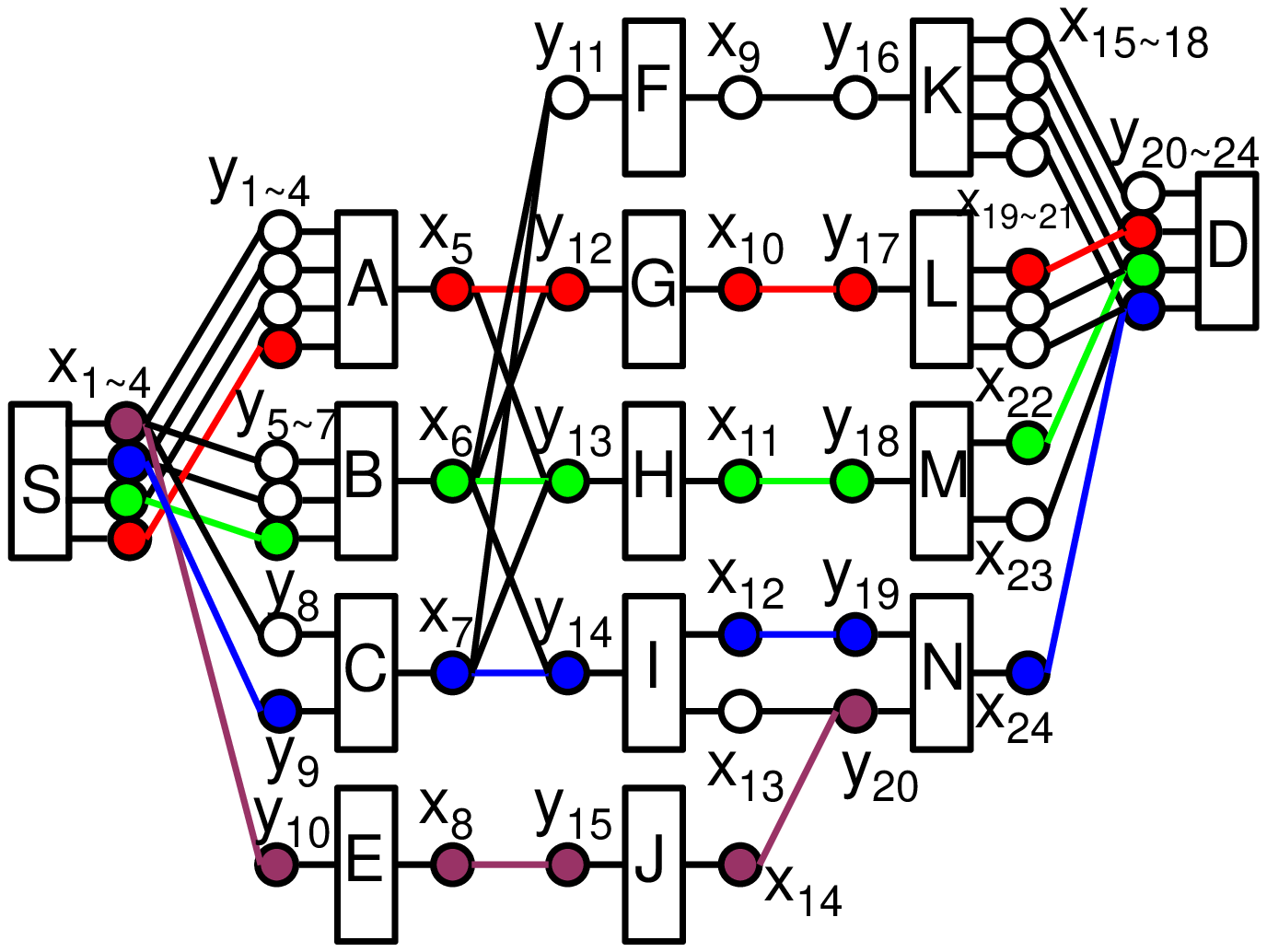}
\label{fig:subfig1}
}
\subfigure[]{
\includegraphics[scale=0.35]{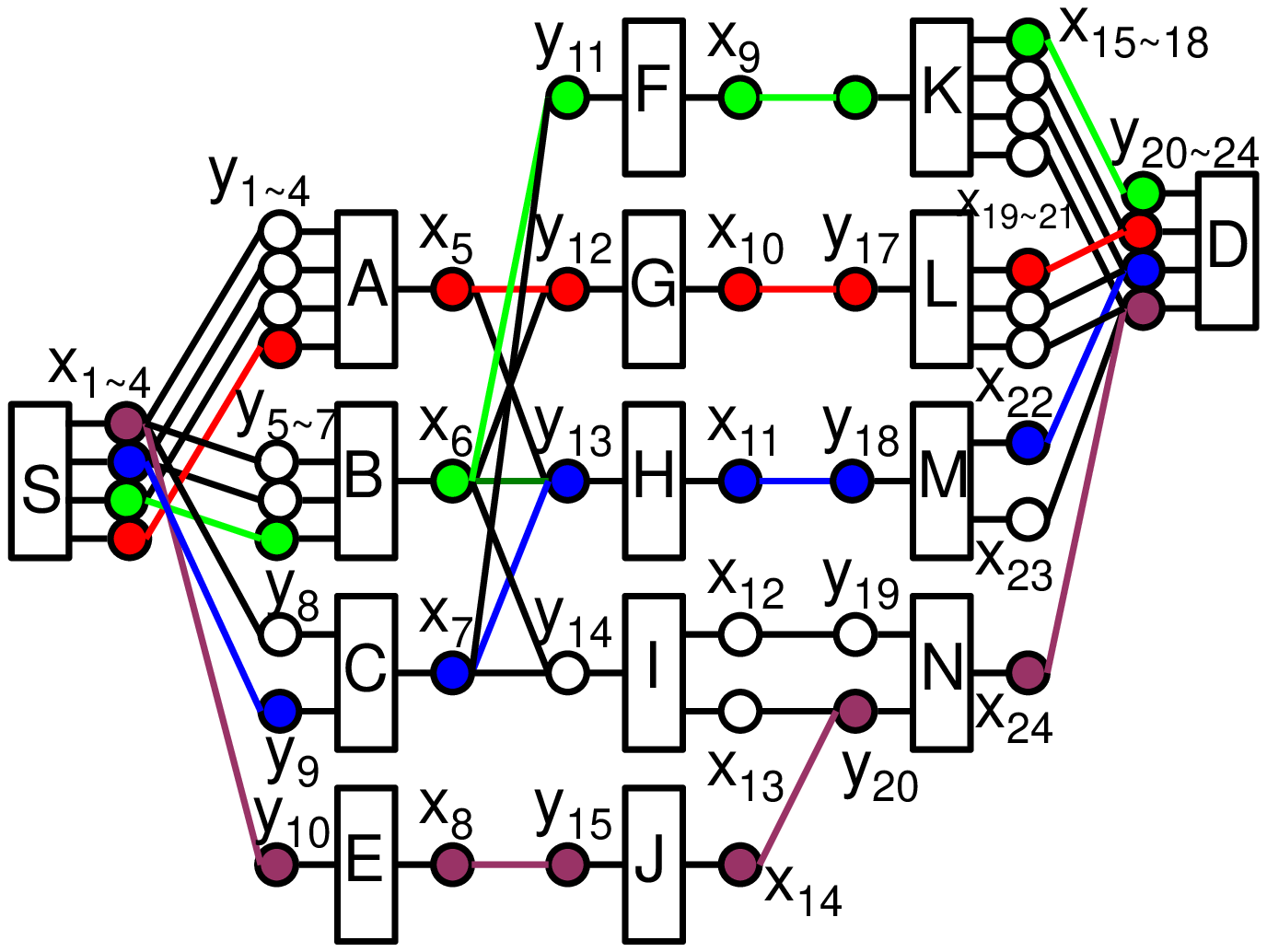}
\label{fig:subfig2}
}
\caption[]{Illustrating example for improved backward rewiring}
\label{fig12}
\end{figure}
\vspace{-1mm}

\noindent
\textbf{Improved Same-Layer Rewiring}

We use the example in Fig. \ref{fig34} to show that the same-layer rewiring in original algorithm is insufficient. Suppose the red S-D path is found in the first iteration. In iteration two, suppose that the algorithm first extends ${\cal{P}}_2$ along the green path to $x_4$. The same-layer rewiring from $x_4$ will mark $x_3$. Since $T(x_3+x_4,y_5+y_6)$ is not full rank, the algorithm fails to complete ${\cal{P}}_2$ along the green path. It continues to extend ${\cal{P}}_2$ along the blue path to $x_5$. Since $x_3$ is marked, the same-layer rewiring from $x_5$ won't be applied on $x_3$ and the call $E_A({\cal{G}},{\cal{P}},{\cal{M}},C)$ fails. The algorithm finally returns false and reports unicast capacity of $1$. However, the network has a unicast capacity of $2$ indicated by the two paths in Fig. \ref{fig:subfig4}.

We develop our improved same-layer rewiring to fix the above problem as follows. First, an input $x_k$ should not be blocked from being visited via same-layer rewiring from any input $x_i$ just because it has been visited via same-layer rewiring from another input $x_j$. Consider the example in Fig. \ref{fig34}. If we allow $x_3$ to be visited via same-layer rewiring from $x_5$, the algorithm may succeed in finding two LI paths as indicated in Fig. \ref{fig:subfig4}. However, this needs to be done carefully. Consider again the example in Fig. \ref{fig34}. If we allow same-layer rewirings from all inputs, then we might run into an infinite loop of going from $x_5$ to $x_3$ via same-layer rewiring and going from $x_3$ to $x_5$ via same-layer rewiring and so on.

The goal of a same-layer rewiring operation in iteration $k+1$ is to ensure that every input, which allows the algorithm to maintain $k$ LI S-D paths and can further extend the current partial path, has the opportunity of being explored, while ensuring that we do not enter an infinite loop. In this work we achieve this by using a pair of labels of each node.

\vspace{-1mm}
\begin{figure}[htbp]
\centering
\subfigure[]{
\includegraphics[scale=0.35]{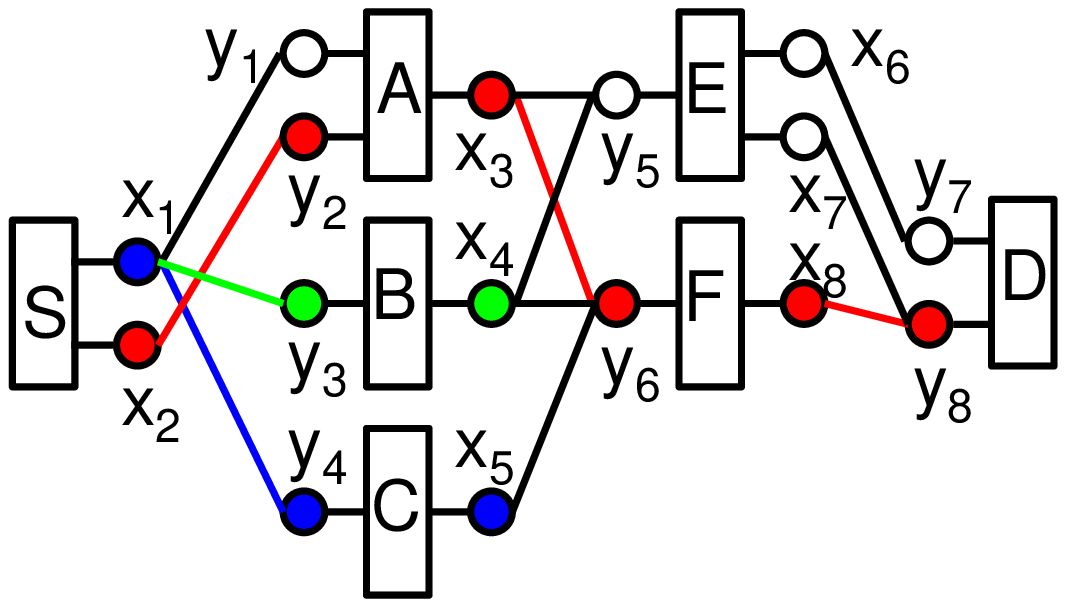}
\label{fig:subfig3}
}
\subfigure[]{
\includegraphics[scale=0.35]{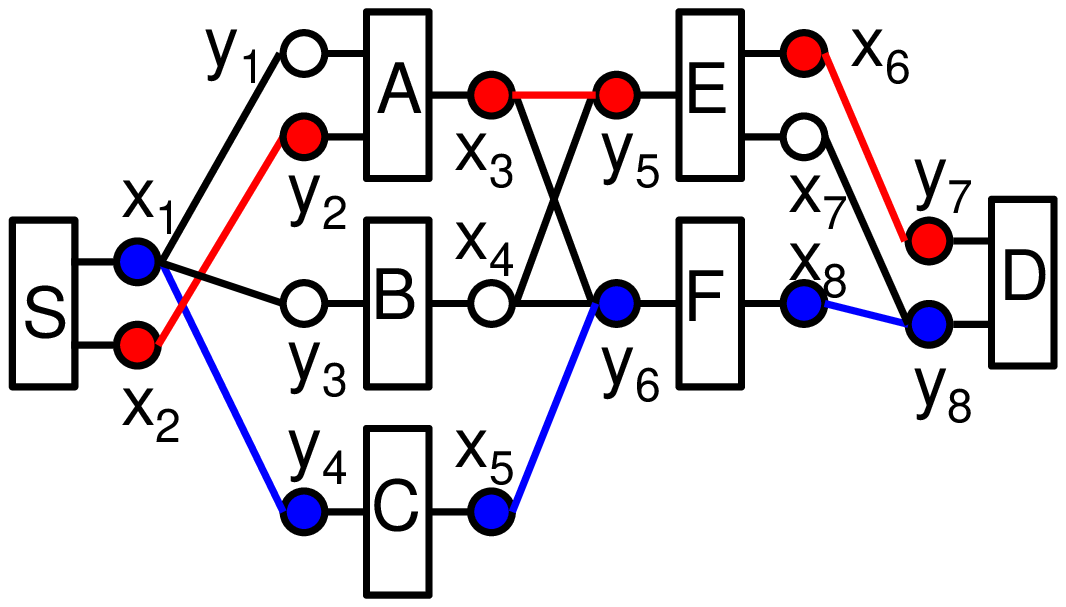}
\label{fig:subfig4}
}
\caption[]{Illustrating example for improved same-layer rewiring}
\label{fig34}
\end{figure}
\vspace{-1mm}

Each node has a label that takes values - ``explored" or ``unexplored". The other label is a type that takes values 1, 2. We initialize the type of every node to be 1 at the beginning of the iteration. A type $1$ input is allowed to initiate same-layer rewirings. An input that is explored via a same-layer rewiring from a type $1$ input $x_i$ is assigned as type $2$. A type $2$ input is not allowed to initiate same-layer rewirings to avoid the possibility of infinite loop. If an input $x$ (of either type) is explored via a backward rewiring, it is re-assigned as type $1$ (since $U_x^l$ and $U_y^l$ change since last time $x$ was explored).

Consider applying our improved same-layer rewiring in the example in Fig. \ref{fig34}. $x_3$ is first visited via a same-layer rewiring from $x_4$ (of type $1$) when it is assigned as type $2$. Later on $x_3$ is revisited via a same-layer rewiring from $x_5$ (of type $1$) when it is assigned as type $2$ again, so it won't initiate a same-layer rewiring to $x_5$, instead it only looks for a possible forward move which happens along the edge $(x_3,y_5)$ (and the improved algorithm finally succeeds in finding $2$ LI paths as in Fig. \ref{fig:subfig4}).

\subsection{Useful Combinatorial Features}\label{useful}

In this subsection, several useful combinatorial features intrinsic in the problem are introduced which are used later in our improved algorithm to reduce the complexity.

In the following, we define a set $\Lambda_{x_i}$ similar to but more general than $L_{x_i}$ in the original algorithm by Amaudruz and Fragouli. $\Lambda_{x_i}$ applies to any size of finite field $\mathbb{F}_p$ associated with the ADT model for the network.

\begin{definition}\label{defin10}
Define $\Lambda_{x_i}$ as a subset of $U_x^{{\cal{L}}(x_i)}$ when $x_i$ is explored such that
\begin{eqnarray}\label{eqn21}
T(x_i,U_y^{{\cal{L}}(x_i)})=\sum_{x_j\in\Lambda_{x_i}}a_{x_i}^j\cdot T(x_j,U_y^{{\cal{L}}(x_i)}).
\end{eqnarray}
where $\{a_{x_i}\}$ are non-zero coefficients from $\mathbb{F}_p$.
\end{definition}

\begin{lemma}\label{lemma00}
$\Lambda_{x_i}$ and the set $\{a_{x_i}\}$ are unique and can be found with complexity $O(k^3)$ in iteration $k+1$.
\end{lemma}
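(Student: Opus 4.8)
The plan is to reduce the claim to a standard fact about expressing a vector as a linear combination of a set of other vectors, and then argue uniqueness from linear independence. Recall that $U^{{\cal{L}}(x_i)}$ is the set of used path edges in layer cut $l = {\cal{L}}(x_i)$, and the rows $\{T(x_j, U_y^l) : x_j \in U_x^l\}$ are linearly independent over $\mathbb{F}_p$ (since the $k$ S-D paths in ${\cal{P}}$ are LI, their edges crossing layer cut $l$ form a matrix of rank $k$, so in particular the $k$ rows indexed by $U_x^l$ are independent). Thus $\{T(x_j, U_y^l) : x_j \in U_x^l\}$ is a basis for its span, which is a $k$-dimensional subspace of $\mathbb{F}_p^{|U_y^l|}$.

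First I would observe that when $x_i$ is explored via case (a)/same-layer rewiring (equivalently whenever $\Lambda_{x_i}$ is sought), the row $T(x_i, U_y^l)$ lies in the span of $\{T(x_j, U_y^l) : x_j \in U_x^l\}$: indeed adding row $x_i$ does not increase the rank, since $T(\{U_x^l, x_i\}, U_y^l)$ has only $|U_y^l| = k$ columns and rank $k$ already. Hence $T(x_i, U_y^l)$ has a unique representation as a linear combination of the basis vectors $\{T(x_j, U_y^l) : x_j \in U_x^l\}$; call the coefficient vector $(c_j)_{x_j \in U_x^l}$. Define $\Lambda_{x_i} = \{x_j \in U_x^l : c_j \neq 0\}$ and $a_{x_i}^j = c_j$ for $x_j \in \Lambda_{x_i}$. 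By construction (\ref{eqn21}) holds with non-zero coefficients, and any other pair $(\Lambda'_{x_i}, \{a'\})$ satisfying (\ref{eqn21}) with non-zero coefficients extends (by zeros on $U_x^l \setminus \Lambda'_{x_i}$) to a representation of $T(x_i, U_y^l)$ in the same basis, which by uniqueness of coordinates must equal $(c_j)$; thus $\Lambda'_{x_i} = \Lambda_{x_i}$ and the coefficients agree. This establishes uniqueness.

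For the complexity bound, I would set up the linear system $T(x_i, U_y^l) = \sum_{x_j \in U_x^l} c_j\, T(x_j, U_y^l)$ as a $k \times k$ system over $\mathbb{F}_p$ (the coefficient matrix is $T(U_x^l, U_y^l)^{\mathsf T}$, which is invertible, and the right-hand side is $T(x_i, U_y^l)^{\mathsf T}$). Solving this by Gaussian elimination over $\mathbb{F}_p$ costs $O(k^3)$ arithmetic operations in the field; reading off the support of the solution to obtain $\Lambda_{x_i}$ and the $\{a_{x_i}\}$ is $O(k)$ more. This gives the claimed $O(k^3)$ bound in iteration $k+1$.

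The only genuinely delicate point — and the one I would be most careful about — is confirming that in every situation where $\Lambda_{x_i}$ is invoked we actually have $T(x_i, U_y^l) \in \mathrm{span}\{T(x_j, U_y^l): x_j \in U_x^l\}$, so that the representation exists at all; this rests on the invariant that ${\cal{P}}$ consists of $k$ LI S-D paths (hence rank $T(U_x^l, U_y^l) = k = |U_y^l|$) maintained throughout the algorithm, together with the fact that $x_i$ is only examined for a same-layer rewiring when some $y_j$ adjacent to $x_i$ is already used, forcing $x_i$'s row into the used column space. Everything else is routine linear algebra over $\mathbb{F}_p$.
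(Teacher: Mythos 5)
Your proposal is correct and follows essentially the same route as the paper, which simply notes that since $T(U_x^{{\cal{L}}(x_i)},U_y^{{\cal{L}}(x_i)})$ has full rank, $\Lambda_{x_i}$ and $\{a_{x_i}\}$ are unique and computable by Gaussian elimination in $O(k^3)$; you merely spell out the basis/coordinate argument and the $k\times k$ linear system explicitly. Your closing worry about whether the representation exists is already dispensed with by your own earlier observation that $T(U_x^l,U_y^l)$ has rank $k$ with only $k$ columns, so every row $T(x_i,U_y^l)$ automatically lies in the span.
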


Since $T(U_x^{{\cal{L}}(x_i)},U_y^{{\cal{L}}(x_i)})$ has full-rank, $\Lambda_{x_i}$ and the set $\{a_{x_i}\}$ are unique and can be found with complexity $O(k^3)$ by using Gaussian elimination.

Let ${\cal{G}}_{x_i}$ denote the bipartite graph containing nodes $U_x^{{\cal{L}}(x_i)}\cup U_y^{{\cal{L}}(x_i)}$ when $x_i$ is explored in iteration $k+1$ and ${\cal{G}}_{x_i}^{+}$ denote the bipartite graph containing nodes $\{x_i\}\cup U_x^{{\cal{L}}(x_i)}\cup U_y^{{\cal{L}}(x_i)}$.

In the following, we refer to an alternating path as a path in which the edges belong alternatively to the set of used edges and the set of unused edges.

\begin{lemma}\label{lemma5}
There is an alternating path from $x_i$ to any $x_j\in\Lambda_{x_i}$ in the graph ${\cal{G}}_{x_i}^{+}$ of the form $P_{x_i\rightarrow x_j}=$ $\{(x_i,y_1)$, $(x_1,y_1)$, $(x_1,y_2)$, $(x_2,y_2)$, $...(x_{m-1},y_m)$, $(x_m=x_j,y_m)\}$ with $(x_q,y_q),1\leq q\leq m$ being edges used by ${\cal{P}}$. The complexity for finding these $|\Lambda_{x_i}|$ paths is bounded by $O(k^2)$ in iteration $k+1$.
\end{lemma}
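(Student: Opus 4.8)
The plan is to recast the linear-algebraic identity \eqref{eqn21} that defines $\Lambda_{x_i}$ as a reachability statement in an auxiliary directed graph, so that all the required alternating paths fall out of a single graph search. Write $l={\cal L}(x_i)$. Since the $k$ used path edges in layer cut $l$ are linearly independent, they have $k$ distinct inputs and $k$ distinct outputs, so $U^l$ is a perfect matching between $U_x^l$ and $U_y^l$; after relabeling we may assume $U_x^l=\{x_1,\dots,x_k\}$, $U_y^l=\{y_1,\dots,y_k\}$, and $(x_q,y_q)\in U^l$ for every $q$. Put $B=T(U_x^l,U_y^l)$ (an invertible $k\times k$ matrix, as noted after Lemma \ref{lemma00}), $c=T(x_i,U_y^l)$, and let $a=(a_{x_i}^1,\dots,a_{x_i}^k)$ be the coefficient vector of \eqref{eqn21}, so that $a^{\top}B=c$ and, by Lemma \ref{lemma00}, $\Lambda_{x_i}=\{x_j:a_{x_i}^j\neq 0\}$.

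I would then build a directed graph $D$ on vertices $\{s,1,\dots,k\}$ with an arc $s\rightarrow j$ whenever $T(x_i,y_j)\neq 0$ and an arc $j\rightarrow j'$ whenever $j\neq j'$ and $T(x_j,y_{j'})\neq 0$. The key point is that each arc of $D$ corresponds to a length-two alternating segment in ${\cal G}_{x_i}^{+}$: the arc $s\rightarrow j$ to the unused edge $(x_i,y_j)$ followed by the used matching edge $(x_j,y_j)$, and the arc $j\rightarrow j'$ to the unused edge $(x_j,y_{j'})$ (unused because $j\neq j'$ and the unique used edge incident to $x_j$ in this layer cut is $(x_j,y_j)$) followed by $(x_{j'},y_{j'})$. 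Hence any simple directed path $s\rightarrow j_1\rightarrow\cdots\rightarrow j_m$ in $D$ unfolds into an alternating path in ${\cal G}_{x_i}^{+}$ of exactly the form claimed for $P_{x_i\rightarrow x_{j_m}}$.

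The heart of the proof is to show that every $x_j\in\Lambda_{x_i}$ is reachable from $s$ in $D$. Let $R\subseteq\{1,\dots,k\}$ be the set of vertices reachable from $s$, and $R^c$ its complement. If $j\in R^c$ then $T(x_i,y_j)=0$ (otherwise $s\rightarrow j$ would be an arc), so $c$ is supported on $R$; and if $j\in R$, $j'\in R^c$ then $T(x_j,y_{j'})=0$ (otherwise $j'$ would be reachable, using $j\neq j'$ automatically). Thus, ordering the indices as $R$ followed by $R^c$, the matrix $B$ becomes block lower-triangular with invertible diagonal blocks; since $c$ is supported on $R$, the unique (by Lemma \ref{lemma00}) vector $a$ solving $a^{\top}B=c$ is also supported on $R$ (explicitly, its restriction to $R$ equals $c_R(B_{RR})^{-1}$ and it vanishes off $R$). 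Hence $a_{x_i}^j=0$ for all $j\in R^c$, i.e. $\Lambda_{x_i}\subseteq R$, and a breadth-first-search tree path from $s$ to $j$ supplies the desired alternating path for each $x_j\in\Lambda_{x_i}$.

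For the complexity bound, $D$ has at most $k+1$ vertices and $k^2$ arcs and is read off directly from the entries of $T(\{x_i\}\cup U_x^l,\,U_y^l)$ in $O(k^2)$ time; one breadth-first search from $s$ runs in $O(k^2)$ time and, through parent pointers, simultaneously encodes a path from $s$ to every reachable vertex, each of length $O(k)$, so all $|\Lambda_{x_i}|$ alternating paths are produced within $O(k^2)$. The step I expect to be the main obstacle is the reachability claim above — turning ``$a_{x_i}^j\neq 0$'' into a concrete alternating path; the block-triangular factorization disposes of it cleanly, but one has to check carefully that $U^l$ really is a perfect matching (so it can be taken to be the diagonal of $B$) and that the arcs of $D$ are oriented so that a directed walk spells out an alternating path which begins with an unused edge at $x_i$ and ends with a used edge at $x_j$, matching the precise shape asserted in the statement.
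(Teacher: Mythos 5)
Your proof is correct, but it takes a genuinely different route from the paper. The paper argues existence per target input: for each $x_j\in\Lambda_{x_i}$ it notes that rank$(T(U_x^l+x_i-x_j,U_y^l))=k$, adjoins an auxiliary output $y'$ with the single edge $(x_j,y')$ so that $T(U_x^l+x_i,U_y^l+y')$ has rank $k+1$, converts the two rank facts into perfect matchings of sizes $k$ and $k+1$ via the Edmonds-matrix (determinant) correspondence, and then invokes Berge's Lemma to extract an augmenting path from $x_i$ to $y'$ relative to $M_1=U^l$, whose truncation is $P_{x_i\rightarrow x_j}$; the $O(k^2)$ bound is then obtained by a BFS in ${\cal G}_{x_i}^{+}$. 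You instead work directly with the exchange digraph $D$ and prove the inclusion $\Lambda_{x_i}\subseteq R$ (the set reachable from $s$) by the block-triangularity argument: unreachable columns force $c$ and the off-diagonal block $B_{R,R^c}$ to vanish, and invertibility of $B_{R^cR^c}$ then kills $a_{R^c}$, so every nonzero coefficient index is reachable and a BFS tree path unfolds into exactly the alternating path of the stated shape. Your version is more self-contained (no matching theory, no determinant/permanent argument, no auxiliary vertices), handles all $x_j\in\Lambda_{x_i}$ simultaneously rather than one augmentation per target, works verbatim over any $\mathbb{F}_p$, and makes the $O(k^2)$ claim immediate because the existence proof and the algorithm are the same BFS; the paper's version is shorter if one is willing to cite Berge's Lemma and the rank--perfect-matching correspondence off the shelf. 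The points you flagged as needing care are indeed the right ones and you dispose of them correctly: $U^l$ is a perfect matching because the $k$ used edges are linearly independent (shared inputs or outputs would force rank below $k$), and an arc $j\rightarrow j'$ with $j\neq j'$ always corresponds to an unused edge since the only used layer-$l$ edge at $x_j$ is $(x_j,y_j)$.
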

\begin{proof}
Let ${\cal{L}}(x_i)=l$. Given rank$(T(U_x^l,U_y^l))=k$, for any $x_j\in\Lambda_{x_i}$, rank$(T(U_x^l,U_y^l))=$ rank$(T(U_x^l+x_i-x_j,U_y^l))=k$ where $k=|{\cal{P}}|$ in iteration $k+1$. Introduce an auxiliary output $y'$ and an edge $(x_j,y')$. It's easy to see that rank$(T(U_x^l+x_i,U_y^l+y'))=k+1$. Let ${\cal{G}}_{x_i}^{++}$ denote the bipartite graph containing nodes $\{x_i\}\cup U_x^l\cup U_y^l\cup\{y'\}$.

Given $T(U_x^l,U_y^l)$ has full rank, we know that the polynomial of the determinant of the Edmonds matrix of the bipartite graph ${\cal{G}}_{x_i}$ is not identically zero, so there is a size $k$ perfect matching in ${\cal{G}}_{x_i}$ \cite{rajeev1995_randomizedalgorithm}, $M_1=U^l$ giving such a matching. Similarly given rank$(T(U_x^l+x_i,U_y^l+y'))=k+1$, there is a size $k+1$ perfect matching in ${\cal{G}}_{x_i}^{++}$. By Berge's Lemma \cite{berge1957}, we know that there is an alternating path, relative to the matching $M_1$, starting from an unused input $x_i$ to an unused output $y'$, alternating between edges not in the current matching $M_1$ and edges in the current matching $M_1$, i.e., there is a path $P_{x_i\rightarrow y'}=$ $\{(x_i,y_1)$, $(x_1,y_1)$, $(x_1,y_2)$, $(x_2,y_2)$, $...(x_{m-1},y_m)$, $(x_m,y_m)$, $(x_m=x_j,y')\}$ with $(x_q,y_q),1\leq q\leq m$ being edges in $M_1$. So we proved that there is an alternating path $P_{x_i\rightarrow x_j}=$ $\{(x_i,y_1)$, $(x_1,y_1)$, $(x_1,y_2)$, $(x_2,y_2)$, $...(x_{m-1},y_m)$, $(x_m=x_j,y_m)\}$ with $(x_q,y_q),1\leq q\leq m$ being edges in $M_1=U^l$.

Since the number of nodes in ${\cal{G}}_{x_i}^{+}$ is bounded by $O(k)$, the number of its edges is bounded by $O(k^2)$. Finding $P_{x_i\rightarrow x_j}$ for all $x_j\in\Lambda_{x_i}$ in ${\cal{G}}_{x_i}^{+}$ can be done with complexity $O(k^2)$ with some well-known graph traversal algorithms, like breadth-first search \cite{CRLS_2001}.
\end{proof}

\begin{lemma}\label{lemmmm22}
Let rank$(T(U_x^l,U_y^l))=|U_x^l|=|U_y^l|=k+1$. Given any $y\in U_y^l$, there exists at least one $x\in U_x^l$, such that rank$(T(U_x^l-x,U_y^l-y))=k$. Moreover there is an alternating path from $y$ to $x$ of the form $P_{y\rightarrow x}=$ $\{(x_1,y_1=y)$, $(x_1,y_2)$, $(x_2,y_2)$, $(x_2,y_3)$, $...(x_{m'-1},y_{m'})$, $(x_{m'}=x,y_{m'})\}$ with $(x_q,y_q),1\leq q\leq m'$ being edges in $U^l$. The complexity of finding one such $x$ is bounded by $O(k^3)$ and the complexity of finding path $P_{y\rightarrow x}$ is bounded by $O(k^2)$.
\end{lemma}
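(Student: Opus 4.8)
The plan is to mirror the structure of the proof of Lemma \ref{lemma5}, but now working ``in reverse'': instead of adding a new input $x_i$ and an auxiliary output $y'$, I remove a specified output $y$ and show that I can simultaneously remove a suitable input $x$ while preserving full rank, then extract the alternating path from a matching argument. First I would set $T = T(U_x^l, U_y^l)$, which by hypothesis is a $(k+1)\times(k+1)$ full-rank matrix over $\mathbb{F}_p$. Deleting the column indexed by $y$ gives a $(k+1)\times k$ matrix of rank $k$; hence its $k+1$ rows are linearly dependent with exactly a one-dimensional dependency space, so there is some row $x$ whose deletion leaves the remaining $k$ rows independent, i.e.\ $\text{rank}(T(U_x^l - x, U_y^l - y)) = k$. (Concretely: any $x$ appearing with nonzero coefficient in the unique dependency relation among the rows of $T(U_x^l, U_y^l - y)$ works. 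Finding this relation is one Gaussian elimination, $O(k^3)$, which matches the claimed complexity for finding $x$.)

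Next I would produce the alternating path. Introduce an auxiliary input $x'$ together with the single edge $(x', y)$, and let ${\cal{G}}'$ be the bipartite graph on $\{x'\} \cup U_x^l \cup U_y^l$. Since $T(U_x^l, U_y^l)$ is full rank, the Edmonds-matrix determinant of ${\cal{G}}_{x_i}$-analogue (the bipartite graph on $U_x^l \cup U_y^l$) is not identically zero, so ${\cal{G}}$ has a perfect matching $M_1 = U^l$ of size $k+1$. On the other hand, $T(U_x^l - x, U_y^l - y)$ has rank $k$, so the bipartite graph on $(U_x^l - x) \cup (U_y^l - y)$ has a perfect matching $M_2$ of size $k$; adding the edge $(x', y)$ to $M_2$ gives a matching of size $k+1$ in ${\cal{G}}'$ that saturates $x'$ (but not $x$). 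Now $M_1$ (viewed inside ${\cal{G}}'$) is a matching that saturates $x$ but not $x'$, while $M_2 + (x',y)$ saturates $x'$ but not $x$. Taking the symmetric difference $M_1 \oplus (M_2 + (x',y))$ and applying Berge's Lemma exactly as in Lemma \ref{lemma5}, the component containing $x'$ is an $M_1$-alternating path from $x'$ to $x$; stripping off the leading edge $(x', y)$ yields an alternating path in $U_x^l \cup U_y^l$ from $y$ to $x$ of precisely the stated form $P_{y\rightarrow x} = \{(x_1, y_1 = y), (x_1, y_2), \ldots, (x_{m'} = x, y_{m'})\}$ whose ``backward'' edges $(x_q, y_q)$ all lie in $U^l = M_1$. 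As in Lemma \ref{lemma5}, the graph has $O(k)$ vertices and $O(k^2)$ edges, so a breadth-first search finds this path in $O(k^2)$.

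The main obstacle I anticipate is not the existence of $x$ (that is elementary linear algebra) but making the alternating-path extraction clean: I need to be careful that the symmetric-difference component reaching $x'$ is genuinely a \emph{path} (not a cycle), that it terminates at the specific $M_1$-unsaturated-in-$U_x^l$ vertex $x$ rather than at some other exposed vertex, and that its edges really do alternate starting and ending with the pattern demanded by the lemma statement (so that the rematching update $U^l = U^l - e_1 + e_2 - \cdots - e_{2m'-1}$ described in the text is valid). Berge's Lemma gives an augmenting/alternating path between two exposed vertices of the symmetric difference of two matchings of different ``exposure patterns''; I would phrase the argument so that $x'$ and $x$ are forced to be the two endpoints by a counting argument on which vertices each matching saturates. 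The remaining complexity bookkeeping ($O(k^3)$ for $x$ via Gaussian elimination, $O(k^2)$ for the BFS) then follows immediately, in parallel with the corresponding claims in Lemma \ref{lemma5}.
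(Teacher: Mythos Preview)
Your proof is correct and follows essentially the same route the paper sketches: first extract $x$ by linear algebra on $T(U_x^l,U_y^l-y)$ (Gaussian elimination, $O(k^3)$), then obtain the alternating path via an auxiliary-vertex/matching argument in the spirit of Lemma~\ref{lemma5}, with BFS giving the $O(k^2)$ bound.

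The one minor difference worth noting is in the auxiliary construction. The paper introduces \emph{both} an auxiliary input $x'$ with edge $(x',y)$ and an auxiliary output $y'$ with edge $(x,y')$; with $x$ already chosen so that $T(U_x^l-x,U_y^l-y)$ has rank $k$, the enlarged $(k+2)\times(k+2)$ matrix has rank $k+2$, and Berge's Lemma yields an $M_1$-augmenting path from $x'$ to $y'$ whose interior is the desired $P_{y\to x}$. You instead introduce only $x'$, explicitly build a size-$k$ matching $M_2$ on $(U_x^l-x)\cup(U_y^l-y)$, and read the path off the symmetric difference $M_1\oplus(M_2+(x',y))$. Your variant is perfectly valid (the only vertices unsaturated by exactly one of the two matchings are $x'$ and $x$, forcing a single alternating path between them), and it trades the second auxiliary vertex for an explicit second matching; the paper's two-auxiliary version is slightly cleaner in that it recycles Lemma~\ref{lemma5}'s argument verbatim without needing to name $M_2$.
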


Due to lack of space, we skip the proof here. The proof of existence of $P_{y\rightarrow x}$ is similar to Lemma \ref{lemma5} by introducing an auxiliary input $x'$ and output $y'$ and edges $(x',y),(x,y')$ leading to rank$(T(U_x^l+x',U_y^l+y'))=k+2$.

Lemma \ref{lemma2} develops an equivalent but computationally simple method to speed up the rank computation when $x_i$ is explored given $\Lambda_{x_i}$ and the set of associated coefficients $\{a_{x_i}\}$.

\begin{lemma}\label{lemma2}
Let $T(U_x^l,U_y^l)$ have full rank $k$. The rank computation for checking rank$(T(U_x^l+x_i,U_y^l+y))=k$ or $k+1$ for any $x_i\not\in U_x^l$, ${\cal{L}}(x_i)=l$, $y\not\in U_y^l$ and $(x_i,y)\in{\cal{E}}$ is equivalent to checking $T(x_i,y)=\sum_{x_j\in\Lambda_{x_i}}a_{x_i}^j\cdot T(x_j,y)$ or not, with complexity bounded by $O(k)$ given $\Lambda_{x_i}$ and $\{a_{x_i}\}$.
\end{lemma}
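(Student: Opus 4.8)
The plan is to reduce the rank test on the $(k+1)\times(k+1)$ (or $(k+1)\times k$) matrix $T(U_x^l+x_i,U_y^l+y)$ to a single scalar identity, using the defining relation \eqref{eqn21} for $\Lambda_{x_i}$ and $\{a_{x_i}\}$. First I would observe that since $T(U_x^l,U_y^l)$ has full rank $k$, the rows $\{T(x_j,U_y^l+y)\}_{x_j\in U_x^l}$ are linearly independent in $\mathbb{F}_p^{\,k+1}$, so the rank of $T(U_x^l+x_i,U_y^l+y)$ is either $k$ or $k+1$, and it equals $k$ precisely when the appended row $T(x_i,U_y^l+y)$ lies in the span of the other rows. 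The key point is that by Definition \ref{defin10}, restricted to the columns $U_y^l$, the only way to write $T(x_i,U_y^l)$ as a linear combination of the rows $\{T(x_j,U_y^l)\}$ is with coefficients $a_{x_i}^j$ on $x_j\in\Lambda_{x_i}$ and $0$ elsewhere — this is the uniqueness asserted in Lemma \ref{lemma00}. Hence \emph{if} $T(x_i,U_y^l+y)$ is in the row span at all, the combination must be exactly $\sum_{x_j\in\Lambda_{x_i}}a_{x_i}^j T(x_j,U_y^l+y)$; looking at the single extra coordinate $y$, this forces $T(x_i,y)=\sum_{x_j\in\Lambda_{x_i}}a_{x_i}^j T(x_j,y)$.

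Concretely, the argument has two directions. For one direction, suppose $T(x_i,y)=\sum_{x_j\in\Lambda_{x_i}}a_{x_i}^j T(x_j,y)$. Combining this with \eqref{eqn21}, the full row $T(x_i,U_y^l+y)$ equals $\sum_{x_j\in\Lambda_{x_i}}a_{x_i}^j T(x_j,U_y^l+y)$, so the appended row is a combination of existing rows and $\mathrm{rank}(T(U_x^l+x_i,U_y^l+y))=k$. For the converse, suppose the rank is $k$; then $T(x_i,U_y^l+y)=\sum_{x_j\in U_x^l}c_j T(x_j,U_y^l+y)$ for some coefficients $c_j\in\mathbb{F}_p$. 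Projecting onto the columns $U_y^l$ gives $T(x_i,U_y^l)=\sum_{x_j\in U_x^l}c_j T(x_j,U_y^l)$, and by the uniqueness in Lemma \ref{lemma00} we get $c_j=a_{x_i}^j$ for $x_j\in\Lambda_{x_i}$ and $c_j=0$ otherwise. Reading off the $y$-coordinate of the row identity then yields $T(x_i,y)=\sum_{x_j\in\Lambda_{x_i}}a_{x_i}^j T(x_j,y)$, as claimed. The complexity claim is immediate: given $\Lambda_{x_i}$ and $\{a_{x_i}\}$ (which contains at most $k$ entries), evaluating $\sum_{x_j\in\Lambda_{x_i}}a_{x_i}^j T(x_j,y)$ and comparing with $T(x_i,y)$ is $O(|\Lambda_{x_i}|)=O(k)$ field operations.

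The only subtle point — the step I would treat most carefully — is the appeal to uniqueness: one must be sure that uniqueness of the representation of $T(x_i,U_y^l)$ in terms of the $\{T(x_j,U_y^l)\}_{x_j\in U_x^l}$ (guaranteed because $T(U_x^l,U_y^l)$ is a full-rank square matrix, so its rows form a basis of $\mathbb{F}_p^{\,k}$) really does pin down \emph{all} coordinates $c_j$, not merely the ones indexed by $\Lambda_{x_i}$; in particular $c_j=0$ for $x_j\notin\Lambda_{x_i}$ is part of what Lemma \ref{lemma00} asserts and is exactly what makes the converse direction go through. Everything else is bookkeeping with the single extra row and column.
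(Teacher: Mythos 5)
Your proof is correct and follows essentially the same route as the paper's: reduce the rank condition to membership of the row $T(x_i,U_y^l+y)$ in the span of the rows indexed by $U_x^l$, invoke the uniqueness of $\Lambda_{x_i}$ and $\{a_{x_i}\}$ from Lemma \ref{lemma00} (which holds because $T(U_x^l,U_y^l)$ is full-rank) to pin down the coefficients, and then compare only the single column $y$, giving the $O(k)$ check. Your writeup merely makes the two directions and the uniqueness step more explicit than the paper does.
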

\begin{proof}
Given $T(U_x^l,U_y^l)$ has full rank $k$, rank$(T(U_x^l+x_i,U_y^l+y))=k$ is equivalent to that
$T(x_i,U_y^l+y)=\sum_{x_j\in\Lambda_{x_i}'}a_{x_i'}^{j}\cdot T(x_j,U_y^l+y)$ for some $\Lambda_{x_i}'\subseteq U_x^l$ and $\{a_{x_i'}\}$. Since $\Lambda_{x_i}\subseteq U_x^l$ and the set $\{a_{x_i}\}$ are unique for which $T(x_i,U_y^l)=\sum_{x_j\in\Lambda_{x_i}}a_{x_i}^j\cdot T(x_j,U_y^l)$ holds (by Lemma \ref{lemma00}), there must be $\Lambda_{x_i}'=\Lambda_{x_i}$ and $\{a_{x_i}\}=\{a_{x_i'}\}$. This leads to that rank$(T(U_x^l+x_i,U_y^l+y))=k$ is equivalent to $T(x_i,y)=\sum_{x_j\in\Lambda_{x_i}}a_{x_i}^j\cdot T(x_j,y)$.
\end{proof}

\begin{lemma} \label{lemmaappen16}
Let $x'\in\Lambda_{x_i}$. If $x'$ is explored via a same-layer rewiring from $x_i$, $\Lambda_{x'}=\Lambda_{x_i}+x_i-x'$ and the set of associated coefficients $\{a_{x'}\}$ can be computed from $\{a_{x_i}\}$ with complexity $O(k)$ in iteration $k+1$.
\end{lemma}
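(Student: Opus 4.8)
The plan is to start from the defining relation for $\Lambda_{x_i}$ and algebraically re-arrange it to exhibit $T(x', U_y^l)$ as a linear combination of $\{T(x_j, U_y^l) : x_j \in \Lambda_{x_i} + x_i - x'\}$, then invoke the uniqueness from Lemma \ref{lemma00} to conclude that the support of this combination is exactly $\Lambda_{x'}$. Let $l = {\cal{L}}(x_i)$ and note that since $x'$ is explored via a same-layer rewiring from $x_i$, the used sets in layer cut $l$ are unchanged: at the moment $x'$ is explored we still have $U_x^l$ and $U_y^l$ as when $x_i$ was explored, so $\Lambda_{x'} \subseteq U_x^l$ is defined with respect to the same matrices. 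By Definition \ref{defin10} applied to $x_i$,
\begin{equation*}
T(x_i, U_y^l) = \sum_{x_j \in \Lambda_{x_i}} a_{x_i}^j \cdot T(x_j, U_y^l),
\end{equation*}
and since $x' \in \Lambda_{x_i}$ the coefficient $a_{x_i}^{x'}$ is nonzero, so I can solve for $T(x', U_y^l)$:
\begin{equation*}
T(x', U_y^l) = \frac{1}{a_{x_i}^{x'}} T(x_i, U_y^l) - \sum_{x_j \in \Lambda_{x_i},\, x_j \neq x'} \frac{a_{x_i}^j}{a_{x_i}^{x'}} \cdot T(x_j, U_y^l).
\end{equation*}
This expresses $T(x', U_y^l)$ as a linear combination, with coefficients in $\mathbb{F}_p$, of the rows indexed by $(\Lambda_{x_i} \setminus \{x'\}) \cup \{x_i\} = \Lambda_{x_i} + x_i - x'$.

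The next step is to check that all the coefficients appearing above are nonzero, so that the support of the combination is exactly $\Lambda_{x_i} + x_i - x'$ and not a proper subset. The coefficient of $T(x_i, U_y^l)$ is $1/a_{x_i}^{x'} \neq 0$, and for $x_j \in \Lambda_{x_i} \setminus \{x'\}$ the coefficient is $-a_{x_i}^j / a_{x_i}^{x'}$, which is nonzero precisely because $a_{x_i}^j \neq 0$ by Definition \ref{defin10}. One subtlety to address: I must confirm that $x_i \notin \Lambda_{x_i}$, which holds because $\Lambda_{x_i} \subseteq U_x^l$ while $x_i \notin U_x^l$ at the time $x_i$ is explored (that is the case covered by the same-layer rewiring branch), so there is no collision between the $x_i$ term and the surviving $\Lambda_{x_i}$ terms. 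Hence the displayed identity is a genuine representation of $T(x', U_y^l)$ over the index set $\Lambda_{x_i} + x_i - x'$ with all-nonzero coefficients.

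Finally, I appeal to uniqueness. Lemma \ref{lemma00} guarantees that the subset $\Lambda_{x'} \subseteq U_x^l$ and the nonzero coefficient set $\{a_{x'}\}$ satisfying $T(x', U_y^l) = \sum_{x_j \in \Lambda_{x'}} a_{x'}^j \cdot T(x_j, U_y^l)$ are unique; since the identity I derived above is exactly of this form with index set $\Lambda_{x_i} + x_i - x'$, uniqueness forces $\Lambda_{x'} = \Lambda_{x_i} + x_i - x'$ and identifies the coefficients: $a_{x'}^{x_i} = 1/a_{x_i}^{x'}$ and $a_{x'}^{x_j} = -a_{x_i}^j / a_{x_i}^{x'}$ for $x_j \in \Lambda_{x_i} \setminus \{x'\}$. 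For the complexity claim, computing $\{a_{x'}\}$ from $\{a_{x_i}\}$ requires one field inversion and at most $|\Lambda_{x_i}| \le k$ multiplications, i.e.\ $O(k)$ operations in iteration $k+1$.

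The main obstacle, such as it is, is not the algebra but making sure the hypotheses line up: specifically that "$x'$ explored via same-layer rewiring from $x_i$" really does mean $U_x^l, U_y^l$ are identical for the two definitions (no intervening rewiring altered layer cut $l$), and that $x_i \notin U_x^l$ so the $x_i$-term is new. Once those are pinned down, the rest is a one-line rearrangement plus an invocation of the uniqueness in Lemma \ref{lemma00}.
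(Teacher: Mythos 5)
Your algebraic core --- solving the defining relation of $\Lambda_{x_i}$ for $T(x',U_y^l)$, checking that every coefficient of the resulting combination is nonzero, invoking uniqueness, and counting one inversion plus at most $k$ multiplications --- is exactly the paper's computation (and your signs are in fact the correct rearrangement). The genuine gap is in your premise that ``the used sets in layer cut $l$ are unchanged'' when $x'$ is explored. They are not: a same-layer rewiring from $x_i$ to $x'$ rematches the used path edges in layer cut $l$ (along the alternating path $P_{x_i\rightarrow x'}$ of Lemma \ref{lemma5}), so at the moment $x'$ is explored the used input set is $U_x^l-x'+x_i$, while $U_y^l$ is unchanged and $T(U_x^l-x'+x_i,U_y^l)$ has full rank. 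This is not a dispensable technicality, because Definition \ref{defin10} defines $\Lambda_{x'}$ as a subset of the used input set \emph{at the time $x'$ is explored}. Under your reading ($\Lambda_{x'}\subseteq U_x^l$ with the old $U_x^l$), the conclusion $\Lambda_{x'}=\Lambda_{x_i}+x_i-x'$ could not even be well formed: $x_i\notin U_x^l$ would be ineligible to appear in $\Lambda_{x'}$, and $x'$ would still be a used input, to which Definition \ref{defin10} and Lemma \ref{lemma00} do not apply. Consequently your uniqueness step is invoked against the wrong matrix: Lemma \ref{lemma00}'s uniqueness, as you use it, is relative to the full rank of $T(U_x^l,U_y^l)$, yet the representation you derived involves the row $T(x_i,U_y^l)$, which is not a row of that matrix.

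The repair is precisely what the paper does, and it is short: note first that the rewiring updates $U_x^l\leftarrow U_x^l-x'+x_i$ with $U_y^l$ unchanged, and that $T(U_x^l-x'+x_i,U_y^l)$ has full rank (this is exactly what the same-layer rewiring guarantees for $x'\in\Lambda_{x_i}$). Your displayed identity then exhibits $T(x',U_y^l)$ as a linear combination, with all nonzero coefficients, of rows of this \emph{new} full-rank used matrix, indexed by $\Lambda_{x_i}+x_i-x'$; linear independence of those rows makes this representation unique, which forces $\Lambda_{x'}=\Lambda_{x_i}+x_i-x'$ and identifies $a_{x'}^{x_i}=1/a_{x_i}^{x'}$ and $a_{x'}^{x_j}=-a_{x_i}^j/a_{x_i}^{x'}$, computable in $O(k)$ since $|\Lambda_{x_i}|\leq k$. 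Your checks that $x_i\notin\Lambda_{x_i}$ and that no coefficient vanishes are correct and worth keeping; only the bookkeeping of which used set (and hence which full-rank matrix) underwrites the uniqueness needs to be fixed.
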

\begin{proof}
Let ${\cal{L}}(x_i)=l$. Note that when $x'$ is explored via a same-layer rewiring from $x_i$, $U_x^l$ is updated as $U_x^l-x'+x_i$, $U_y^l$ is unchanged and $T(U_x^l-x'+x_i,U_y^l)$ has full rank. Based on definition,
\begin{eqnarray}\label{eqn81}
T(x_i,U_y^l)=\sum_{x_j\in\Lambda_{x_i}\char92 x'}a_{x_i}^j\cdot T(x_j,U_y^l)+a_{x_i}'\cdot T(x',U_y^l).
\end{eqnarray}
where $\{a_{x_i}\}$ are non-zero coefficients from $\mathbb{F}_p$. So we have
\begin{eqnarray}\label{eqn82}
T(x',U_y^l)=\sum_{x_j\in\Lambda_{x_i}\char92 x'}\frac{a_{x_i}^j}{a_{x_i}'}\cdot T(x_j,U_y^l)-\frac{1}{a_{x_i}'}\cdot T(x_i,U_y^l).
\end{eqnarray}
Since $T(U_x^l-x'+x_i,U_y^l)$ has full rank, equation (\ref{eqn82}) is the unique way that the row $T(x',U_y^l)$ can be expressed as a linear combination of the rows in this matrix.
So we conclude $\Lambda_{x'}=\Lambda_{x_i}+x_i-x'$ and the set of associated coefficients $\{a_{x'}\}$ can be computed from $\{a_{x_i}\}$ with complexity $O(k)$. Note that in iteration $k+1$, $|\Lambda_{x_i}|\leqslant|U_x^l|=k$.
\end{proof}

\subsection{Reducing the Complexity and the Overall Algorithm}

As mentioned before, the computational parts of algorithm \cite{aurore2009_combinatorial_algo_deterministic} include the FindL (finding $L_{x_i}$), Match (update $U$ after a same-layer rewiring from $x_i$) and rank computation functions. Now we explain how the combinatorial features from Section \ref{useful} can be used to further reduce the complexity of the unicast algorithm.

Lemma \ref{lemma00} shows that $\Lambda_{x_i}$ and the set of associated coefficients $\{a_{x_i}\}$ for any type $1$ input $x_i$ can be computed with complexity $O(k^3)$ in iteration $k+1$. Lemma \ref{lemmaappen16} tells that for any type $2$ input $x'$, $x'\in\Lambda_{x_i}$, that is explored via a same-layer rewiring from a type $1$ input $x_i$, $\Lambda_{x'}$ and the set of associated coefficients $\{a_{x'}\}$ can be computed with complexity $O(k)$ given $\Lambda_{x_i}$ and the set of associated coefficients $\{a_{x_i}\}$.

Second, based on Lemma \ref{lemma5}, the matching or updating of $U$ after same-layer rewirings from any type $1$ input $x_i$ can be done with complexity $O(k^2)$ in iteration $k+1$ as follows. First find all $|\Lambda_{x_i}|$ paths $P_{x_i\rightarrow x_j}$, $\forall x_j\in\Lambda_{x_i}$ with complexity $O(k^2)$ for $x_i$. Let $P_{x_i\rightarrow x_j}=$ $\{(x_i,y_1)$, $(x_1,y_1)$, $(x_1,y_2)$, $...(x_{m-1},y_m)$, $(x_m=x_j,y_m)\}$ $=\{e_1,e_2,...,e_{2m}\}$ with $(x_q,y_q),1\leq q\leq m$ being edges used by ${\cal{P}}$ for any $x_j\in\Lambda_{x_i}$. Then updating of $U^{{\cal{L}}(x_i)}$ after a same-layer rewiring from $x_i$ to $x_j$ can be done by $U^{{\cal{L}}(x_i)}\leftarrow U^{{\cal{L}}(x_i)}+e_1-e_2+...-e_{2m}$.

Third, Lemma \ref{lemma2} tells that the rank computation in a forward move from any $x_i$ (either of type $1$ or of type $2$), $x_i\not\in U_x^l$, ${\cal{L}}(x_i)=l$, for checking rank$(T(U_x^l+x_i,U_y^l+y))=k$ or $k+1$ for any $y\not\in U_y^l$ and $(x_i,y)\in{\cal{E}}$ is equivalent to checking $T(x_i,y)=\sum_{x_j\in\Lambda_{x_i}}a_{x_i}^j\cdot T(x_j,y)$ or not, with complexity bounded by $O(k)$ given $\Lambda_{x_i}$ and $\{a_{x_i}\}$ in iteration $k+1$.

Finally, as mentioned before, in our improved backward rewiring from an output $y$, to find one $x$ with $T(U_x^l-x,U_y^l-y)$ having full rank and to rematch $(U_x^l-x,U_y^l-y)$ can be done with complexity $O(k^3)$ in iteration $k+1$ guaranteed by Lemma \ref{lemmmm22}.

Table \ref{table1} gives an overall description of our improved unicast algorithm which is implemented in a function $E_A({\cal{G}},{\cal{P}},{\cal{M}},A)$ where all inputs are the same as in the original algorithm. A complete software implementation of our improved unicast algorithm can be found in \cite{myhomepage}.

\begin{table}[htbp]
\caption{Pseudo-code for our improved algorithm}
\vspace{-4mm}
\begin{center}
\resizebox{1.0\columnwidth}{!}{
\begin{tabular}{l}
\hline
\hline
\{(T,F)\}=$E_A({\cal{G}},{\cal{P}},{\cal{M}},A)$\\
$\left\{\begin{array}{l}
{\cal{M}}(A)=T, {\cal{L}}(A)=l\\
U^l=\{{\mbox{used edges in layer cut }}l\}, U_x^l=\{x_i\in U^l\}, U_y^l=\{y_j\in U^l\}\\
{\mbox{for any }}x: {\cal{A}}(x)=A, x\not\in U_x^l, {\cal{M}}(x)=F, {\mbox{GetType}}(x)=2\\
\left\{\begin{array}{l}
{\cal{M}}(x)=T\\
{\mbox{for any }}y: (x,y)\in {\cal{E}}, y\not\in U_y^l, {\cal{M}}({\cal{A}}(y))=F {\mbox{ //\emph{forward move}}}\\
\left\{\begin{array}{l}
{\mbox{if }}T(x,y)\neq\sum_{x_j\in\Lambda_{x}}a_x^j\cdot T(x_j,y)\\
\left\{\begin{array}{l}
{\mbox{Update}}({\cal{P}}); U^l\leftarrow U^l+e\\
{\mbox{if }}{\cal{A}}(y)=D, {\mbox{ return (T)}}\\
{\mbox{else if }}E_A({\cal{G}},{\cal{P}},{\cal{M}},{\cal{A}}(y))=T, {\mbox{ return(T)}}\\
U^l\leftarrow U^l-e; {\mbox{ Restore}}({\cal{P}})\\
\end{array}\right.\\
\end{array}\right.\\
\end{array}\right.\\
{\mbox{for any }}x: {\cal{A}}(x)=A, x\not\in U_x^l, {\cal{M}}(x)=F, {\mbox{GetType}}(x)=1\\
\left\{\begin{array}{l}
{\cal{M}}(x)=T\\
{\mbox{Compute }}\Lambda_{x}{\mbox{ and the set of coefficients }}\{a_x\}\\
{\mbox{for any }}y: (x,y)\in {\cal{E}}, y\not\in U_y^l, {\cal{M}}({\cal{A}}(y))=F{\mbox{ //\emph{forward move}}}\\
\left\{\begin{array}{l}
{\mbox{if }}T(x,y)\neq\sum_{x_j\in\Lambda_{x}}a_x^j\cdot T(x_j,y)\\
\left\{\begin{array}{l}
{\mbox{Update}}({\cal{P}}); U^l\leftarrow U^l+e \\
{\mbox{if }}{\cal{A}}(y)=D, {\mbox{ return (T)}}\\
{\mbox{else if }}E_A({\cal{G}},{\cal{P}},{\cal{M}},{\cal{A}}(y))=T, {\mbox{ return(T)}}\\
U^l\leftarrow U^l-e; {\mbox{ Restore}}({\cal{P}})\\
\end{array}\right.\\
\end{array}\right.\\
{\mbox{Find all paths }}P_{x\rightarrow x_j}{\mbox{ for all }}\forall x_j\in\Lambda_{x}\\
{\mbox{for any }}x_j: x_j\in\Lambda_{x}{\mbox{ with }}P_{x\rightarrow x_j}=\{e_1,e_2,...e_{2m}\}=\\
\{(x,y_1),(x_1,y_1),(x_1,y_2),...(x_m=x_j,y_m)\}{\mbox{ //\emph{same-layer rewiring}}}\\
\left\{\begin{array}{l}
{\cal{M}}(x_j)=F; {\mbox{ SetType}}(x_j,2);\\
\Lambda_{x_j}=\Lambda_{x}-x_j+x\\
{\mbox{compute }}\{a_{x_j}\}{\mbox{ based on }}\{a_x\}{\mbox{ according to Lemma \ref{lemmaappen16}}}\\
{\mbox{Update}}({\cal{P}}); U^l\leftarrow U^l+e_1-e_2+...+e_{2m-1}-e_{2m}\\
{\mbox{if }}E_A({\cal{G}},{\cal{P}},{\cal{M}},{\cal{A}}(x_j))=T, {\mbox{ return(T)}}\\
U^l\leftarrow U^l-e_1+e_2-...-e_{2m-1}+e_{2m}; {\mbox{ Restore}}({\cal{P}})\\
\end{array}\right.\\
\end{array}\right.\\
{\mbox{for any }}y: {\cal{A}}(y)=A, y\in U_y^{l-1}, {\cal{M}}(y)=F \\
{\mbox{and }}y {\mbox{ is used by }}{\cal{P}} {\mbox{ at the beginning of the iteration}}{\mbox{ //\emph{backward rewiring}}}\\
\left\{\begin{array}{l}
{\cal{M}}(y)=T\\
{\mbox{find one }}x\in U_x^{l-1}{\mbox{ with }}T(U_x^{l-1}-x,U_y^{l-1}-y){\mbox{ having full rank}}\\
{\mbox{and find }}P_{y\rightarrow x}=\{e_1,e_2,...e_{2m'-1}\}\\
=\{(x_1,y_1=y),(x_1,y_2),(x_2,y_2),...(x_{m'}=x,y_{m'})\}\\
{\cal{M}}(x)=F, {\mbox{SetType}}(x,1)\\
{\mbox{Update}}({\cal{P}}); U^{l-1}\leftarrow U^{l-1}-e_1+e_2-...-e_{2m'-1}\\
{\mbox{If }}E_A({\cal{G}},{\cal{P}},{\cal{M}},{\cal{A}}(x))=T,{\mbox{ return (T)}}\\
U^{l-1}\leftarrow U^{l-1}+e_1-e_2+...+e_{2m'-1};{\mbox{ Restore}}({\cal{P}})\\
\end{array}\right.\\
{\mbox{return (F)}}\\
\end{array}\right.$\\
\hline
\hline
\end{tabular}
}
\end{center}
\label{table1}
\end{table}

\vspace{-3mm} 

\subsection{Complexity Analysis and Comparison with Existing Results}\label{comparison}

To analyze the complexity, we first bound the total number of inputs of different types being visited in each iteration of the algorithm. Note that once a node or input/output is visited/explored, it's labeled as explored (by ${\cal{M}}$) and not allowed to be explored again unless it is relabeled as unexplored again. At the beginning of each iteration, all inputs are initialized as unexplored type $1$ inputs whose number is bounded by $O(|{\cal{V}}_x|)$ (let ${\cal{V}}_x=$\{all inputs in the network\}). In each backward rewiring operation, one input will be assigned as unexplored type $1$ input. From the definition of backward rewiring, the total number of valid outputs that initiate a backward rewiring is no more than $|{\cal{V}}_x|$, which means the total number of backward rewiring operations is bounded by $O(|{\cal{V}}_x|)$. So the total number of type $1$ inputs being visited is bounded by $O(|{\cal{V}}_x|)$ in each iteration. In each same-layer rewiring operation from a type $1$ input, one input will be assigned as unexplored type $2$ input. The total number of same-layer rewiring operations from any type $1$ input $x$ is no more than $|\Lambda_{x}|\leqslant k$ in iteration $k+1$. So the total number of type $2$ inputs being visited is bounded by $O(k|{\cal{V}}_x|)$ in iteration $k+1$.

The worst case in computation in iteration $k+1$ are no more than: (1) for each type $1$ input $x_i$, compute $\Lambda_{x_i}$ and $\{a_{x_i}\}$ with complexity $O(k^3)$ and find all paths $P_{x_i\rightarrow x_j}$ for $\forall x_j\in\Lambda_{x_i}$ with complexity $O(k^2)$, (2) for each type $2$ input $x_j$, compute $\Lambda_{x_j}$ and $\{a_{x_j}\}$ with complexity $O(k)$, (3) for each type $1$ or type $2$ input $x$, compute rank for $T(U_x^l+x,U_y^l+y)$ for all $y\not\in U_y^l$, $(x,y)\in{\cal{E}}$ with complexity $O(k)$ given $\Lambda_x$ and $\{a_x\}$ (for any $x$, the total number of such $y$ is no larger than $d$) and (4) in each backward rewiring from a certain $y$, find one $x$ with $T(U_x^l-x,U_y^l-y)$ having full rank and to rematch $(U_x^l-x,U_y^l-y)$ with complexity $O(k^3)$. Note that $k\leqslant C$. It's obvious that the total complexity of our improved algorithm is bounded by $O(|{\cal{V}}_x|\cdot C^4+d\cdot|{\cal{V}}_x|\cdot C^3)$.

Due to lack of space, we skip the proof of correctness for our improved algorithm, however a complete and detailed proof can be found in \cite{myhomepage}.

Table \ref{table2} lists the comparison results between different algorithms for finding the unicast capacity of linear deterministic wireless relay networks, specially in their complexity.

\begin{table}[htbp]
\caption{Comparison of algorithm complexity}
\vspace{-4mm}
\begin{center}
\resizebox{1.0\columnwidth}{!}{
\begin{threeparttable}
\begin{tabular}{l|p{32mm}|p{58mm}}
\hline
\hline
Algorithm&Complexity{\tnote{$\ast$}}&Notes\\
\hline
\cite{aurore2009_combinatorial_algo_deterministic}&$O(M |{\cal{E}}| C^5)$&Always higher than ours\\
\cite{fragouli2009_journal}&$O(d|{\cal{V}}_x|C^5+|{\cal{V}}_y|C^5)$&especially when $C$ is large\\
\hline
\cite{sadegh2009_combinatorialstudyofdeterministic}&$O(L^8 M^{12} h_0^3+L M^6 C h_0^4)$&Always higher than ours, especially when $M$ or $L$ is large\\
\hline
\cite{geomans2009}&$O(L^{1.5} M^{3.5} \log(ML))$ or $O(L M^3 \log M)$&Straightforward comparison is not possible. \cite{geomans2009} will have lower complexity if $C$ is much larger than $M$\\
\hline
Our work&$O(|{\cal{V}}_x| C^4+d |{\cal{V}}_x| C^3)$&-\\
\hline
\hline
\end{tabular}
\begin{tablenotes}
\item[$\ast$]{Denote $C$ as the unicast capacity, $M$ the maximum number of nodes in each layer, $L$ the total number of layers, $d$ the maximum number of inputs of any node, $h_0$ the maximum number of inputs/outputs at any layer, $E$ the total number of edges, $|{\cal{V}}_x|$ the total number of inputs and $|{\cal{V}}_y|$ the total number of outputs. Note that $M\geq d$ (since by definition each input can have at most one connection to each node in the next layer), $|{\cal{E}}|\geq|{\cal{V}}_x|$ (because of broadcasting) and $h_0\geq C$ (based on definition).}
\end{tablenotes}
\end{threeparttable}
}
\end{center}
\label{table2}
\end{table}

\vspace{-3mm} 

We note that the issues with the original algorithm \cite{aurore2009_combinatorial_algo_deterministic} mentioned in Section \ref{improvement} have been fixed in \cite{fragouli2009_journal}. The main difference between our improved algorithm and the algorithm in \cite{fragouli2009_journal} is that our improved algorithm utilizes those useful combinatorial features intrinsic in the problem described in Section \ref{useful} which lead to reduced complexity. The other difference comes from the same-layer rewiring and backward rewiring. In \cite{fragouli2009_journal}, the same-layer rewiring starts on each input at most once (using the ML indicator function) while our algorithm allows multiple same-layer rewirings starting from certain inputs (that is, if an input is explored via a backward rewiring, it is reassigned as type $1$ input and allows to initiate same-layer rewiring again). In \cite{fragouli2009_journal}, the backward rewiring (implemented in $\phi$-function there) allows exploration on every $x_k\in U_x$ such that the resulting adjacency matrix of used path edges still remains full rank while our algorithm only finds one such $x_k\in U_x$ and explores it. Note that it can be verified that the combined effects of the different same-layer rewiring and backward rewiring in two algorithms are the same.

\section{Conclusions\label{conclusion}}

An improved algorithm for finding the unicast capacity of linear deterministic wireless networks is presented. Our algorithm improves upon the original algorithm by Amaudruz \& Fragouli. We amend the original algorithm so that it finds the unicast capacity correctly for any given deterministic networks. Moreover we fully explore several useful combinatorial features intrinsic in the problem which lead to reduced complexity. Our improved algorithm applies with any size of finite field associated with the ADT model defining the network. Our improved algorithm proves to be very competitive when comparing with other algorithms on solving the same problem in terms of complexity.

\bibliographystyle{IEEEtran}
\bibliography{main}

\end{document}